\acrodef{NCR}{network-controlled repeater}
\acrodef{HO}[HO]{Handover}
\acrodef{UL}{uplink}
\acrodef{DL}{downlink}
\acrodef{3GPP}{3rd generation partnership project}
\acrodef{RIS}[RIS]{reconfigurable intelligent surface}
\acrodef{5G}{5th generation}
\acrodef{6G}[6G]{6th generation}
\acrodef{BS}[BS]{base station}
\acrodef{PL}[PL]{path-loss}
\acrodef{HOD}{HO decision} 
\acrodef{MIMO}{multiple input multiple output}
\acrodef{SNR}{signal-to-noise ratio}
\acrodef{SINR}[SINR]{signal-to-interference-plus-noise ratio}
\acrodef{SDP}{semidefinite programming}
\acrodef{CDF}{cumulative distribution function} 
\acrodef{CSI}{channel state information}
\acrodef{gNB}[BS]{base station}
\acrodef{AWGN}{additive white Gaussian noise}
\acrodef{CCI}{co-channel interference}
\acrodef{ULA}{uniform linear array}
\acrodef{RCS}{radar cross-section}
\acrodef{LoS}{line of sight} 
\acrodef{UE}[UE]{user equipment}
\acrodef{OFDM}{orthogonal frequency division multiplexing}
\acrodef{DoF}{degrees of freedom}
\acrodef{MI}{mutual interference}
\acrodef{Uu}{access link interface} 
\acrodef{TTT}{time-to-trigger}
\acrodef{HOM}{handover margin} 
\acrodef{RSRP}{received signal reference power}
\acrodef{RSRQ}[RSRQ]{received signal reference quality}
\acrodef{RLF}{radio link failure}
\acrodef{KPI}{key performance indicator}
\acrodef{HOC}{handover control}
\acrodef{HCP}{handover control parameter}
\acrodef{HOP}{handover probability} 
\acrodef{HOPP}{handover ping-pong} 
\acrodef{HCPs}{handover control parameters}
\acrodef{IoT}{internet of things}
\acrodef{V2X}{vehicle-to-everything}
\acrodef{UAV}{unmanned aerial vehicle}
\acrodef{AI}{artificial intelligence}
\acrodef{QoS}{quality of service}
\acrodef{THz}[THz]{terahertz}
\acrodef{NF}[NF]{near-field}
\acrodef{FF}[FF]{far-field}
\acrodef{UPW}[UPW]{uniform planewave}
\acrodef{NUSW}[NUSW]{non-uniform spherical wave}
\acrodef{SE}[SE]{ spectral efficiency}
\definecolor{lighterviolet}{RGB}{220,200,240}
\renewcommand\@cite[2]{\textcolor{blue}{[}\textcolor{blue}{#1}\textcolor{blue}{]}} 
 \newtheorem{lemma}{Lemma}
\begin{document}

\title{A Novel Framework for Near-Field Covert Communications with RIS and RSMA}

\author{Atiquzzaman Mondal, Amira Bendaimi, Hüseyin Arslan~\IEEEmembership{Fellow,~IEEE}

\thanks{A. Mondal, A. Bendaimi and H. Arslan are associated with the Department of Electrical and Electronics Engineering at Istanbul Medipol University, Kavacık North Campus, 34810 Istanbul, Türkiye (email: atiq336@gmail.com, amira.bendaimi@std.medipol.edu.tr, huseyinarslan@medipol.edu.tr).}

}

{}

\maketitle

\begin{abstract}
This paper explores the near field (NF) covert communication with the aid of rate-splitting multiple access (RSMA) and reconfigurable intelligent surfaces (RIS). In particular, the RIS operates in the NF of both the legitimate user and the passive adversary, enhancing the legitimate user's received signal while suppressing the adversary’s detection capability. Whereas, the base station (BS) applies RSMA to increase the covert communication rate composed of a private and a shared rate component. To characterize system covertness, we derive closed-form expressions for the detection error probability (DEP), outage probability (OP), and optimal detection threshold for the adversary. We formulate a non-convex joint beamforming optimization problem at the BS and RIS under unit-modulus constraints to maximize the covert rate. To tackle this, we propose an alternating optimization (AO) algorithm, where the BS beamformer is designed using a two-stage iterative method based on successive convex approximation (SCA). Additionally,  two low-complexity techniques are introduced to further reduce the adversary’s received power. Simulation results demonstrate that the proposed algorithm effectively improves the covert communication rate, highlighting the potential of near field RSMA-RIS integration in covert communication. 

\end{abstract}

\begin{IEEEkeywords}
Reconfigurable intelligent surfaces (RIS), covert communication, rate-splitting multiple access (RSMA) , near-field (NF), hybrid beamforming.
\end{IEEEkeywords}

\section{Introduction}
\lettrine{W}{ith} the development of beyond 5G (B5G) wireless technology, the demand for high speed and ultra reliable data transmission has risen exponentially in the modern society, which have been facilitated by the adoption of XL-MIMO systems, millimeter wave (mmWave), terahertz (Thz) technology, and reconfigurable intelligent surfaces RISs \cite{jiang2024physical,alsabah20216g}. However, security remains a
critical concern in these deployments, due to the broadcast nature of the shared medium, which introduces various privacy and security vulnerabilities such as eavesdropping and jamming attacks. For instance, Not always that a receiver can get what is being transmitted from the transmitter side, an unintended user often termed as Warden/Willie tries to decode those without being noticed \cite{hamamreh2018classifications}.

In recent years, Covert communication has emerged as cutting-edge security technology that can be used to conceal the confidential message undetected \cite{chen2023covert}. various techniques such as multi-antenna system, full duplex, relay have been used to achieve positive covert communication \cite{chen2021multi, shu2019delay, lv2022achieving}. However, since these techniques are usually channel-adaptive, the wireless propagation environment has a significant impact on covert communication. Henceforth, RIS is considered a promising solution for providing more controlled and secure communication \cite{zhang2025star}
\begin{table*}[t!]
\centering
\caption{Comparison table with some existing literature}
\begin{tabular}{c|cccccc}
\hline
\textbf{Reference} & \textbf{RIS} & \textbf{Near-field (NF)} & \textbf{RSMA} & \textbf{KPIs}\\ \hline
{\cite{wu2022passive}} & $\checkmark$ & $\times$ & $\times$ & OP, intercept probability (IP), DEP and system security probability (SSP) \\ \hline
{\cite{zhu2023active}} & $\checkmark$ & $\times$ & $\times$ & covert rate, OP, DEP\\ \hline
{\cite{liang2025covert}} & $\checkmark$ & $\times$ & $\checkmark$ & covert communication rate, OP, DEP\\ \hline
{\cite{zhang2025star}} & $\checkmark$ & $\times$ & $\checkmark$ & covert communication rate\\ \hline
{\cite{10609798}} & $\checkmark$ & $\checkmark$ & $\times$ & covert rate, beamforming design\\ \hline
Our proposed work & $\checkmark$ & $\checkmark$ & $\checkmark$ & OP, DEP, covert communication rate, beamforming design \\ \hline
\multicolumn{5}{l}{$\checkmark$: considered; $\times$: not considered.}
\end{tabular}
\label{Comparison table}
\end{table*}

RIS is an evolving technology consisting of passive reflecting elements, controllable by software and is capable of manipulating the electromagnetic (EM) wave by dynamically reconfiguring the propagation environment by adjusting the amplitude and phase for the reflected signal \cite{wu2019beamforming, trichopoulos2022design, huang2022reconfigurable}, typically used in blockage issues to improve the network coverage saving significant power \cite{barbuto2021metasurfaces, pan2021reconfigurable, basharat2021reconfigurable, liang2021reconfigurable}. 
To reap the benefits of RIS the authors in \cite{kong2021intelligent, wu2022passive,zhu2023active,lu2020intelligent} incorporated RIS into covert communication systems by optimizing the reflection matrix and jointly designing transmit beamformers. However, beamforming design plays a critical role in covert communication, as it not only strengthens the signal for legitimate users but also reduces signal leakage to potential eavesdroppers such as Willie. Therefore, this design becomes more challenging when aiming to balance improved link quality for the legitimate user and reduced detectability.

The current existing literatures on RIS-assisted covert communications presents several shortcomings. In particular, at THz, mmWave or supposedly beyond those frequency ranges that has significant path loss, the double fading effect issue becomes more severe \cite{yu2020wideband,xing2021millimeter,rodriguez2018frequency}. Furthermore, most prior works focus primarily on far field (FF) communications under planar wave propagation, which leads to limited covert rate performance in spatially correlated environments. As the signal carrying beam broadens up, unintended users might be able to detect the information and it becomes very difficult to differentiate between the received signals of the passive adversary and the legitimate user as they are positioned closely with respect to the BS.

Lately, near field RIS has emerged as a key enabler for enhancing covert communication in the NF region that becomes operational as both the array aperture and operating frequencies increase. The EM propagation in the NF is described by the spherical wave (SW) channel model, which depends on both angle and distance \cite{liu2024near, want2011near}, unlike the traditional FF planar wave channel model. By enabling beam focusing, it allows the beam pattern to focus on a particular area, thereby the transmission via the NF-RIS creates an effective covert communications \cite{ramzan2023reconfigurable, wu2023enabling, zheng2024location}. Nevertheless, concurrently serving multiple users while maintaining covert transmission and achieving high spectral efficiency remains a significant challenge.  

In this aspect, another emerging technology i.e., RSMA is deemed not only more effective but also a robust multiple access technique for successful multi-user communication \cite{10993434, bash2012square, bash2014lpd}. In \cite{cai2021resource}, the authors proposed a cooperative rate-splitting (CRS) transmission scheme to enhance physical layer security, where the common messages are used as artificial noise to confound potential eavesdroppers. With a view to boost the energy and spectrum efficiency, the authors in \cite{mao2018energy} examined RSMA with other multiple access techniques. In \cite{zhou2021rate}, the authors suggested that RSMA outperforms traditional techniques in terms of spectrum and energy efficiency in a multi-antenna transmission scenario. Additionally, the works in \cite{zhang2024rate} demonstrated the effectiveness of RSMA in covert communication, especially in enhancing the rate of the covert communication.

Moreover, the integration of RSMA with other cutting edge technologies such RIS has gained a substantial attention among researchers from industry and academia to make the wireless communication system ultra reliable. Specifically, by regulating the phase shifts and amplitude of the reflected signals, RIS introduces a new approach for covert communications and makes it challenging for the adversaries to detect the communication activities with an improved covert performance \cite{kong2021intelligent, wu2023irs}. In \cite{liang2025covert,zhang2025star,shambharkar2022rate}, the authors look into a RIS-assisted RSMA communication system, where the BS communicates with multiple users via RSMA in order to enrich covertness. Nevertheless, these studies primarily focus on far-field RIS-assisted RSMA systems, the potential of near-field RSMA remains largely unexplored and worth investigate.

In summary, the effectiveness of covert communication in B5G communication system is expected to be greatly influenced by the emerging technologies such as NF communication, RIS, and RSMA. We have presented a comparison in Table~\ref{Comparison table} whereby just one or two of these technologies have been used in the framework of covert communication, therefore underlining the originality of this work. To the best of the authors’ knowledge, the application of RSMA to near field RIS systems for covertness analysis has not yet been studied in the literature. In this work, analyze the covertness performance and optimization of near field RSMA aided by RIS. The main contributions of this study are outlined as follows

\begin{itemize}
        \item We introduce a novel near-field covert communication framework using RIS and RSMA, where the RIS is placed in the NF of both the covert user i.e., Bob and Willie to enhance Bob’s received power while attenuating Willie’s, thereby strengthening system covertness.
        
        \item We formulate the DEP expressions for Willie, the covert communication rate, as well as the optimal threshold for it's detection in closed form to build a covertness constraint for optimization. Furthermore, we analyzed the outage probability for the system under consideration. 
         
       \item We formulate a non-convex optimization problem for the joint design of BS beamforming and RIS phase shifts aimed at maximizing the covert rate, accounting for unit-modulus constraints and interdependent large-scale variables associated with the RIS, public users, and Willie.  
       
        \item We develop an alternating-optimization (AO) algorithm that splits the problem into two subproblems. The BS beamformer is designed via a two-phase iterative method leveraging successive convex approximation (SCA). To further minimize the complexity of the method used, we propose two suboptimal techniques so that Willie's received signal power is minimized.
         
       \item Simulation results confirm the robustness and superior security performance of the proposed NF-RSMA scheme over its far-field counterpart, demonstrating higher average covert rates and lower probabilities of detection. 
   
\end{itemize}
The reminder of this work is arranged as follows. Section II outlines the system model. The DEP performance at Willie is analyzed in Section III along with  the closed-form equations for DEP, the transmission rate, and OP of the covert communication system. Section IV illustrates the convergence analysis along with the optimization methods. The analytical results are then explained in Section V, which is followed by conclusion of the work in Section VI.

The notations used in this paper are given in Table~\ref{Notations}.

\begin{table}[h!]
\centering
\caption{Notations used in this work}
\resizebox{\columnwidth}{!}{
\begin{tabular}{cc}
\hline
\textbf{Notations} & \textbf{Definitions} \\
\midrule
\rowcolor{blue!10} $N = N_{y} \times N_{z}$ & Number of RIS reflective elements with $N_{y}$ and $N_{z}$ being the horizontal rows and vertical columns respectively \\
$\mathcal{CN}(\mu, \sigma^{2})$ & The complex AWGN distribution with mean $\mu$ and variance $\sigma^{2}$ \\
\rowcolor{blue!10} $\mathbb{E}(\cdot)$ and $\text{Var}(\cdot)$ & Expectation and variance \\
$(\cdot)^{T}, (\cdot)^{*}, (\cdot)^{H}$ & Transpose, conjugate, and the Hermitian of a quantity \\
\rowcolor{blue!10} $||\cdot||_{2}$ & Denotes the Euclidean norm \\
$vec(\cdot), Tr(\cdot)$ & Denote the vector operation and trace of any quantity \\
\rowcolor{blue!10} $|\cdot|$ & Denotes the absolute value \\
$\circ, \otimes$ & Denote the Hadamard product and Kronecker product \\
\rowcolor{blue!10} $\text{Pr}(\cdot)$ & Denotes the probability \\
$\text{diag}(\cdot)$ & Denotes the diagonal matrix of a quantity \\
\rowcolor{blue!10} $\mathbb{C}^{X \times Y}$ & Denotes the set of $X \times Y$ complex matrix \\
$f_{\sigma_{w}^{2}}(\cdot)$ & Denotes the PDF of Willie \\
\rowcolor{blue!10} $\mathbf{H}_{br}, \mathbf{g}_{rb}, \mathbf{g}_{rw}, \mathbf{h}_{bru}$ & Denote the channel links BS-RIS, RIS-Bob, RIS-Willie, and BS-RIS-User respectively \\
$\mathbf{y}_{k}, \mathbf{y}_{W}$ & Denote the received signal at User-k and at Willie respectively \\
\rowcolor{blue!10} $P_{bs}, P_{W}$ & Denotes the transmit power of BS and average received power of Willie respectively \\
$R_{c,k}, R_{p,k}$ & Denote the achievable rates for decoding $s_{c}$ and $s_{k}$ respectively \\
\rowcolor{blue!10} $\gamma_{c,k}, \gamma_{p,k}$ & Denotes the SINR for the common and private signal respectively \\
$\mathbf{W}, \Theta$ & Denote the BS beamforming matrix and RIS phase shift matrix respectively \\
\bottomrule
\end{tabular}
}
\label{Notations}
\end{table}
\vspace{-8mm}
\section{System Model}
As illustrated in Fig.~\ref{sys_model}, we consider a covert communication system enabled by a NF RSMA assisted by RIS, consisting of a legitimate BS, $K$ RSMA users denoted as $\left(\text{User}~1, \text{User}~2, \cdots, \text{User}~K\right)$, where $\text{User}~1$ (referred as Bob) is the covert user, while the remaining users are public users and an illegal user named Willie acts as the warden, attempting to detect the presence of any data transmissions from the BS. In this particular configuration, Bob, other users and Willie are equipped with a single antenna and BS is equipped with $M_{BS}$ antennas $\left(M_{BS} \geq 1\right)$\footnote{The BS operates in high-frequency bands (e.g., mmWave or THz), requiring hybrid beamforming; a fully digital architecture is assumed at Bob for analytical simplicity.}. We assume that the direct link between BS and Bob is blocked, and the indirect link is aided by an RIS that is strategically placed within the NF region and consists of $N = N_y \times N_z$ passive reflecting elements arranged in a uniform planar array (UPA), where $N_{y}$ and $N_{z}$ denote numbers of elements in horizontal and vertical directions, respectively. Additionally, The phase shift of the RIS is given as $\mathbf{\Theta}=diag\left(e^{j\theta_{1}},\cdots,e^{j\theta_{N}}\right)$, where $\theta_{n}\in [0,2\pi)$ is the phase shift of $n^{th}$ reflecting element with its amplitude assumed to be unity. 
\vspace{-6mm}
\subsection{Channel Model}
In this setup, the widely adopted Saleh-Valenzuela (SV) channel model is used to characterize the channels between the BS and various nodes, including the RIS, Bob, Willie, and ~$k$-th public user. Specifically, the BS–RIS channel is denoted by $\mathbf{H}_{br} \in \mathbb{C}^{N \times M_{BS}} \sim \mathcal{CN}\left(0, \sigma_{br}^{2} \mathbf{I}_{N} \right)$, while the RIS–user $x$ channels are represented by $\mathbf{g}_{rx} \in \mathbb{C}^{1 \times N} \sim \mathcal{CN}\left(0, \sigma_{rx}^{2} \mathbf{I}_{N} \right)$, where the subscript  $ x \in \{b,w,u_k\}$ refers to Bob, Willie, and the $k$-th public user, respectively. Since the RIS is assumed to be located within FF region of the BS, the BS–RIS channel $\mathbf{H}_{br}$ is modeled using a far-field planar-wave-based channel representation, defined as \cite{bendaimi2024leverage}
\begin{align}\label{eqn_4}
    \mathbf{H}_{br} = \tilde{P}_{bt} \sum_{j=1}^{J_p} \daleth_j \mathbf{a}_{\text{UPA}} (\vartheta_{\text{AoA}}^{j}, \phi^{j}) \mathbf{a}_{\text{ULA}}^{\text{H}} (\vartheta_{\text{AoD}}^{j}),
\end{align}
where $\tilde{P}_{bt}=\sqrt{\frac{M_{BS} N D_{pl}}{J_p}}$, $(J_p)$ is the total resolvable signal paths, $(D_{pl})$ is the average path loss, $(\daleth_j)$ and $(\vartheta_{\text{AoD}}^{j})$ denote the $(j)^{th}$ path's complex channel gain and the angle of departure (AoD) associated with BS respectively.  $(\phi^{j})$ and $(\vartheta_{\text{AoA}}^{j})$ denote the azimuth angle and elevation angles of arrival (AOAs) related to the RIS. Meanwhile, $\mathbf{a}_{\text{UPA}}$ and $\mathbf{a}_{\text{ULA}}$ represent the normalized UPA vector and the uniform linear array (ULA) vector, respectively.
Thus, $\mathbf{a}_{\text{ULA}}$ is defined as
\begin{align}\label{eqn_5}
    &\mathbf{a}_{\text{ULA}} (\vartheta)=\nonumber\\
    & \frac{1}{\sqrt{M}} \left[ 1, \ldots, e^{j \frac{2\pi d}{\lambda} (M-1) \sin(\vartheta)}, \ldots, e^{j \frac{2\pi d}{\lambda} (M-1) \sin(\vartheta)} \right]^{\text{T}},
\end{align}
where $d$ is the separation of the antenna and $M$ is the number of elements of the ULA. Similarly, $\mathbf{a}_{\text{UPA}}$ is given as \cite{bendaimibeam}
\begin{align}\label{eqn_6}
    &\mathbf{a}_{\text{UPA}} (\vartheta, \phi) = \frac{1}{\sqrt{N}}\mathbf{a}_{y}\left(\vartheta, \phi \right)\otimes\mathbf{a}_{z}\left( \phi \right)\nonumber\\
    &=\frac{1}{\sqrt{N}} \left[ 1, e^{j \frac{2\pi d}{\lambda} \sin(\vartheta) \cos(\phi)}, \ldots, e^{j \frac{2\pi d}{\lambda} (N_{y} - 1) \sin(\vartheta) \cos(\phi)} \right]^{\text{T}}\nonumber\\
    &\otimes \left[ 1, e^{j \frac{2\pi d}{\lambda} \sin(\phi)}, \ldots, e^{j \frac{2\pi d}{\lambda} (N_{z}-1) \sin(\phi)} \right]^{\text{T}}.
\end{align}
\begin{figure}[t!]
    \centering
        \includegraphics[width=\linewidth]{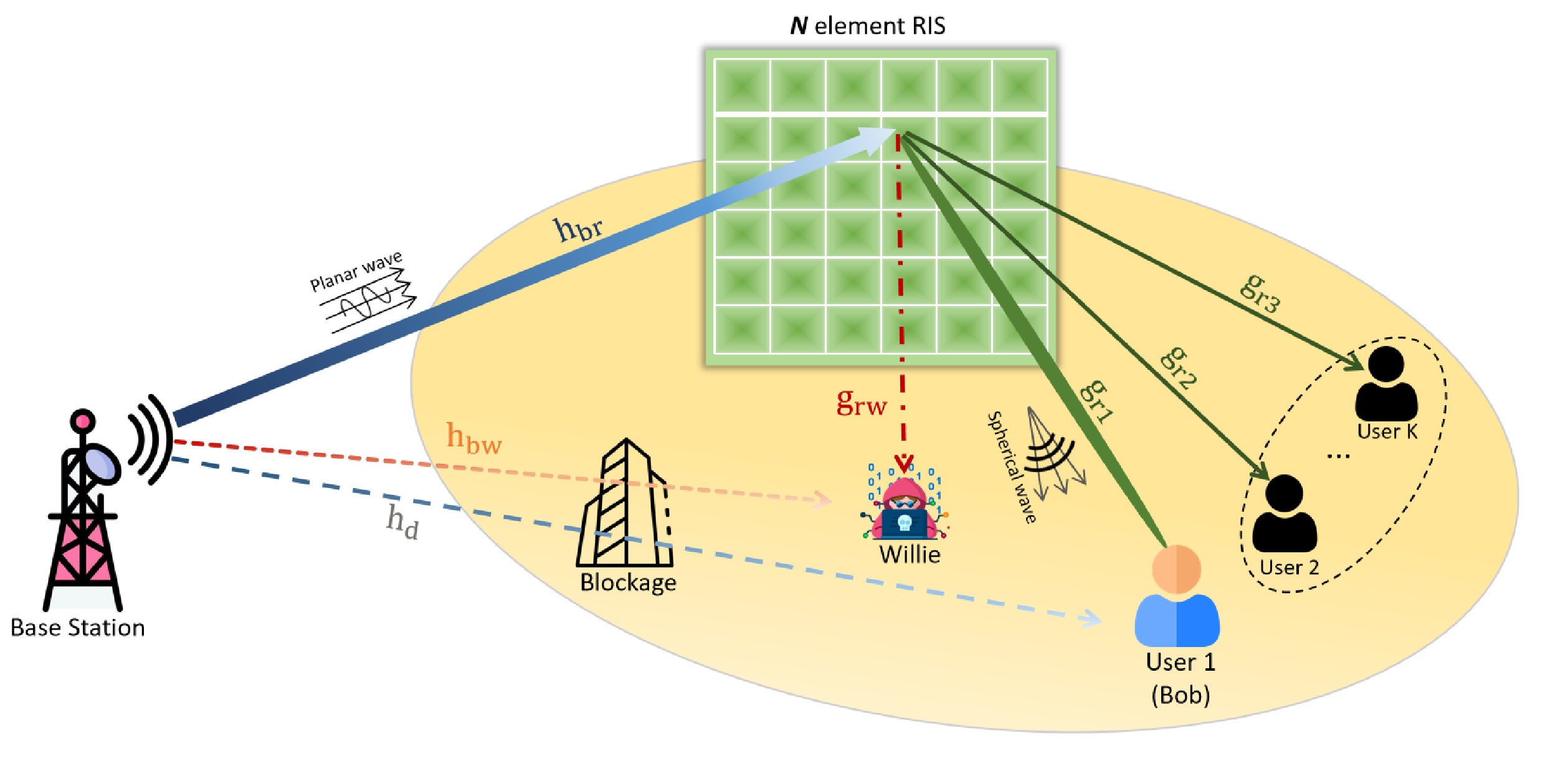}
    \caption{Illustration of a NF-RIS-RSMA-aided covert communication system.}
   \label{sys_model}
\end{figure}

Since the separation between the RIS and the user $x$ is shorter than the Rayleigh distance\footnote{We assume all users, including Willie to be in the NF due to significant received power discrepancies, especially for Willie. Suppose, Willie is in the FF and Bob in NF, then the received power of Willie could be lower than the noise power due to double-fading in higher frequencies, making the considered scenario inefficient. Conversely, if Bob is FF and Willie NF, it eventually becomes RIS-assisted covert communication in FF, which is not this work's focus. The NF exists if $0.62\sqrt{\frac{D_{RIS}}{\left( c/f\right)}}<d_{Rayleigh}<\frac{2D_{RIS}^{2}}{\left( c/f\right)}$, which implies spherical signal propagation after reflection from the RIS since $d_{Rayleigh}$ is proportional to $D_{RIS}$ and frequency band used.} $\left(d_{Rayleigh}= 2\left(D_{RIS} + D_{U_x} \right)*f/c \right)$, where $c$, $f$, $D_{RIS}$ and $D_{U_x}$ are speed of light, carrier frequency, RIS antenna aperture and antenna aperture of user $x$, respectively. Bob, Willie and other public users are assumed to be located within the near-field region of the RIS. Henceforth, the channel  $\mathbf{g}_{rx} \mathbb{C}^{N\times 1}$ is modeled using a near-field spherical-wave representation, given by \cite{lu2023near}
\begin{equation}
   \mathbf{g}_{rx}=\sqrt{\frac{N}{L_x}} \sum_{l=1}^{L_x} \alpha_l \mathbf{b}\left(\theta_{l_x}, r_{l_x}\right), 
\end{equation} where $L_{x}$ denote the number of paths between RIS and user $x$, $\alpha_{l}$ is the complex gain,  and $\mathbf{b}\left(\theta_{l_x}, r_{l_x}\right)$ $\in \mathbb{C}^{N\times 1}$, represent the near-field array response vector. without loss of generality,  $\mathbf{b}\left(\theta_l, r_l\right)$ can be defined as follows \cite{lu2023near}
\begin{equation}
    \mathbf{b}\left(\theta_l, r_l\right)=\frac{1}{\sqrt{N_1}}\left[e^{-j \frac{2 \pi}{\lambda}\left(r_l^{(1)}-r_l\right)}, \cdots, e^{-j \frac{2 \pi}{\lambda}\left(r_l^{\left(N\right)}-r_l\right)}\right]^H
\end{equation} where $r_l$ denotes the distance of the $l$-th scatterer from the center of the transmitter antenna array, $r_l^{\left(n\right)}=$ $\sqrt{r_l^2+\delta_{n}^2 d^2-2 r_l \delta_{n} d \sin \theta_l}$ represents the distance of the $l$-th scatterer from the $n$-th transmitter antenna, $\delta_{n}=$ $\frac{2 n_-N_-1}{2}$ with $n=1,2, \cdots, N, \theta_l \in(-\pi / 2, \pi / 2)$ are the practical physical angles and $d$ represents half a wavelength of antenna spacing. Furthermore, the overall cascaded BS-to-user $x$ channel via RIS can be explicitly given as
\begin{align}\label{eqn_8}
   \mathbf{h}_{brx} = \mathbf{H}_{br}^{H}\mathbf{\Theta}\mathbf{g}_{rx}.
\end{align}
\vspace{-9mm}
\subsection{Signal Model}
 To address the message that has been transmitted from the BS to all the users, we incorporate the RSMA framework, via which the BS will cleave the information into common and private segment. However, the common message corresponding to the $k$ users are obfuscated into a common message stream, while the private messages are merged into a private message stream. Therefore, the signal that the BS transmits is written as
\begin{align}\label{eqn_1}  
    \mathbf{x}_{bs} = \underbrace{\mathbf{w}_{c} s_{c}}_{\text{Common signal}} + \underbrace{\sum_{k=1}^{K} \mathbf{w}_{k} s_{k}}_{\text{Private signal}},
\end{align}
where $\mathbf{w}_c \in \mathbb{C}^{M_{BS} \times 1}$ and $\mathbf{w}_{k} \in \mathbb{C}^{M_{BS} \times 1}$ are the linear beamforming vectors for the common and private symbols, respectively, and $\mathbb{E}\{|s_{c}|^2\} = 1$ and $\mathbb{E}\{|s_{k}|^2\} = 1$.

We consider that the warden (Willie), adopts an energy detection technique to identify possible transmission from the BS. For covert communication, the covert rate is greatly impacted by noise calibration mistakes introduced at Willie by variations in the electromagnetic environment. Adopting the constraint uncertainty model, the probability density function (PDF) of Willie's noise power $\sigma_w^2$ can be represented as 
\begin{align}\label{eqn_2}
    f_{\sigma_w^2}(x_{bs}) = 
        \begin{cases} 
            \frac{1}{2 \ln (\rho) x_{bs}}, & \frac{1}{\rho} \tilde{\sigma}_w^2 \leq x_{bs} \leq \rho \tilde{\sigma}_w^2 \\ 
            0, & \text{otherwise},
        \end{cases}
\end{align}
where $\tilde{\sigma}_w^2$ indicates the power of noise not having any ambiguity, i.e., $\rho = 1$, and $\rho$ represents uncertainty coefficient of the noise. The following inequality is satisfied by Willie's noise power, which is given as 
\begin{align}\label{eqn_3}
    \frac{1}{\rho} \tilde{\sigma}_w^2 \leq \sigma_w^2 \leq \rho \tilde{\sigma}_w^2.
\end{align}

Therefore, the signal that is received at User $k$, i.e., $\mathbf{y}_{k}$\footnote{It is important to note that, when $k=1$, it refers to the RIS-Bob link i.e., $\mathbf{g}_{rb}$ or $\mathbf{y}_{b}$.} can be represented as
\begin{align}\label{eqn_9}
   \mathbf{y}_{k} = \mathbf{h}_{bru}^{H}\mathbf{w}_{c} s_{c}+ \sum\limits_{k=1}^{K}\mathbf{h}_{bru,k}^{H}\mathbf{w}_{k} s_{k}+n_{k},
\end{align}
where $n_{k} \sim \mathcal{CN}\left(0, \sigma_{k}^{2}\right)$. In accordance with the RSMA protocol, the user incipiently decodes the common message $s_{c}$ while treating the private message as interference. Following this, the user then interprets its own private message stream, treating other users' private message streams as interference, after performing successive interference cancellation (SIC) to remove $s_{c}$. Therefore, the approximate achievable rates for decoding $s_{c}$ and $s_{k}$ for the considered user $k$ can be respectively conveyed as
\begin{align}\label{eqn_10}
    R_{c,k} &= \text{log}_{2}\left(1+\frac{|\mathbf{h}_{bru}^{H} \mathbf{w}_c |^2}{ \sum\limits_{k=1}^{K} |\mathbf{h}_{bru}^{H} \mathbf{w}_{k} |^2 + \sigma_{k}^{2} }\right), \quad \forall k \in K,
\end{align}
and
\begin{align}\label{eqn_11}
     R_{p,k} &= \text{log}_{2}\left(1\frac{|\mathbf{h}_{bru}^{H} \mathbf{w}_{k} |^2}{ \sum\limits_{i=1, i\neq k}^{K} |\mathbf{h}_{bru}^{H} \mathbf{w}_{i} |^2 + \sigma_{k}^2 }\right), \quad \forall k \in K.
\end{align}
One of the important factors in assessing the performance of system under consideration is the covert rate. According to the RSMA principle, the common rate cannot exceed the channel capacity associated with the common symbol so as to guarantee that every user considered in the system can properly decode the common message. The segment of the $k^{th}$ user related to the common rate is $\sum\limits_{k=1}^{K} \wp_{c,k} \leq R_{c,k}, \forall k \in K$. Therefore, the overall possible sum-rate of the considered RSMA under the NF system can therefore be given as 
\begin{align*}
    R_{\text{Total}} = \sum\limits_{k=1}^{K} \left(\wp_{c,k} + R_{p,k}\right).
\end{align*}
\section{Covertness Performance analysis}
This section analyses the covertness requirements of the proposed system. Usually Willie looks for evidence that the BS is sending covert information to Bob or not. For this, Willie utilizes the Neyman-Pearson criterion or binary hypothesis testing problem to make the decisions. $(\mathcal{K}_{0})$ i.e., the null hypothesis indicates that the BS is silent and did not send any message to Bob, whereas $(\mathcal{K}_{1})$ i.e., the alternative hypothesis suggests that the BS is covertly communicating with Bob. Accordingly, the received signal at Willie under the assumption of $\mathcal{K}_{0}$ and $\mathcal{K}_{1}$ can be given as follows
\begin{align}\label{eqn_12}
    \mathbf{y}_{W}= 
        \begin{cases} 
           \mathbf{n}_{w}; \quad \mathcal{K}_{0} \\ 
             \left(\mathbf{H}_{br}^{H}\mathbf{\Theta}\mathbf{g}_{rw}\right)\mathbf{x}_{bs}+\mathbf{n}_{w}; \quad \mathcal{K}_{1}
        \end{cases}
\end{align}
where $n_{w}$ is the additive  Gaussian noise (AWGN), i.e., $n_{w} \sim \mathcal{CN}\left(0, \sigma_{w}^{2}\right)$. Accordingly, Willie's average received power can be expressed as
\begin{align}\label{eqn_13}
    P_{W}= 
        \begin{cases} 
             \sigma_{w}^{2}; \quad \mathcal{K}_{0} \\ 
            P_{bs}\left|\left|\mathbf{H}_{br}^{H}\mathbf{\Theta}\mathbf{g}_{rw} \right|\right|_{2}^{2} + \sigma_{w}^{2}; \quad \mathcal{K}_{1}.
        \end{cases}
\end{align}
Subsequently, the covert performance of the proposed system is analyzed.First, the DEP at Willie is derived, followed by the analysis of the user's outage probability (OP) and covert transmission rate.
\subsection{Detection Error Probability (DEP) at Willie}
This subsection presents the analytical expressions for the probability that Willie successfully detects the covert message transmitted by the BS to Bob, i.e., the hypothesis test performed by Willie, which is quantified by the detection error probability (DEP), defined as
\begin{align}\label{eqn_14}
   P_{DEP}=P_{FAP}+P_{MDP}, \quad 0 \leq P_{DEP} \leq 1,
\end{align}
where $P_{FAP} = Pr\left(\mathcal{D}_{1} | \mathcal{K}_{0}\right) = Pr\left(P_{W} > \zeta| \mathcal{K}_{0}\right)$ is the probability that Willie incorrectly detects a transmission when BS does not actually sent any messages, and $P_{MDP} = Pr\left(\mathcal{D}_{0} | \mathcal{K}_{1}\right) = Pr\left(P_{W} < \zeta| \mathcal{K}_{1}\right)$ is the missed detection probability in which Willie incorrectly detects that BS has not sent any messages when it has already sent. Here, $\zeta$, $\mathcal{D}_{0}$ and $\mathcal{D}_{1}$ denote the threshold of the power detection of Willie, decisions of Willie that the BS is not transmitting or is transmitting to Bob, respectively. Specifically, $P_{DEP}=0$ indicates Willie's perfect detection of error free covert transmission from BS to Bob, while $P_{DEP}=1$ indicates the Willie's failure in detecting the covert transmission and gives a random assumption. 

Furthermore, based on the criterion of Neyman-Pearson hypothesis, the most effective strategy for Willie to achieve the minimum DEP is the likelihood test i.e.,
$\frac{p_{1}\left(\mathbf{y}_{W}\right)}{p_{0}\left(\mathbf{y}_{W}\right)}  \overset{\scriptstyle \mathcal{D}_{1}}{\underset{\scriptstyle \mathcal{D}_{0}}{\gtrless}}1$, where $p_{0}\left(\mathbf{y}_{W}\right)$ and $p_{1}\left(\mathbf{y}_{W}\right)$ are the corresponding likelihood functions related to the received signals for the hypotheses $\mathcal{K}_{0}$ and $\mathcal{K}_{1}$, respectively. Therefore, depending on the \eqref{eqn_13} and \eqref{eqn_14}, the expression for $P_{FAP}$ and $P_{MDP}$ for Willie can be respectively expressed as
\begin{align}\label{eqn_15}
  P_{FAP} = \Pr\left(P_{W} > \zeta | \mathcal{K}_0\right) = \Pr\left(\sigma_{w}^{2} > \zeta | \mathcal{K}_0\right),
\end{align}
and 
\begin{align}\label{eqn_16}
   P_{MDP} = \Pr\left(P_{W} < \zeta | \mathcal{K}_1\right) = \Pr\left(\aleph + \sigma_{w}^{2} < \zeta | \mathcal{K}_1 \right),
\end{align}
where $\aleph = P_{bs}\left|\left|\mathbf{H}_{br}^{H}\mathbf{\Theta}\mathbf{g}_{rw} \right|\right|_{F}^{2}$. Therefore, the total DEP can be expressed as
\begin{align}\label{eqn_17}
   P_{DEP}&=\Pr\left(P_{W} > \zeta | \mathcal{K}_0\right) + \Pr\left(P_{W} < \zeta | \mathcal{K}_1\right)\nonumber\\
   &= 1- \Pr \left(\zeta-\aleph \leq \sigma_w^2 \leq \zeta \right)\nonumber\\
   &= 1 - \int\limits_{\max\left( \zeta - \aleph, \frac{1}{\rho} \tilde{\sigma}_w^2 \right)}^{\min\left( \zeta, \rho \tilde{\sigma}_w^2 \right)} f_{\sigma_w^2}(x_{bs}) \, dx_{bs}\nonumber\\
   &= 1-\frac{1}{2\text{ln}(\rho)}\left\{\text{ln}\left[\min\left( \zeta, \rho \tilde{\sigma}_w^2 \right)\right] \right.\nonumber\\
   &-\left.\text{ln}\left[\text{max}\left( \zeta - \aleph, \frac{\tilde{\sigma}_{w}^{2}}{\rho}\right) \right]\right\}
\end{align}
\begin{lemma}\label{Lem1}
     Consider $\zeta \geq \rho \tilde{\sigma}_w^2 $. The the optimal detection threshold i.e., $\zeta^{*}$ is constrained by $\rho \tilde{\sigma}_w^2$.  
\end{lemma}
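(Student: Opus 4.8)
The plan is to treat Willie as choosing the threshold $\zeta$ so as to minimize $P_{DEP}$, and to show that once $\zeta$ is pushed into the regime $\zeta \ge \rho\tilde{\sigma}_w^2$ the detection error probability can only grow, so the best Willie can do is sit at the left endpoint $\zeta=\rho\tilde{\sigma}_w^2$. Concretely, I would start from the closed-form DEP in \eqref{eqn_17} and impose the regime assumption. Since $\zeta\ge\rho\tilde{\sigma}_w^2$, we have $\min(\zeta,\rho\tilde{\sigma}_w^2)=\rho\tilde{\sigma}_w^2$, so the first logarithm $\ln[\rho\tilde{\sigma}_w^2]$ becomes a constant independent of $\zeta$, and the entire dependence of $P_{DEP}$ on $\zeta$ is funneled into the single term $-\ln\!\big[\max(\zeta-\aleph,\tilde{\sigma}_w^2/\rho)\big]$.

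Next I would use that $\rho>1$ gives $\ln\rho>0$, so minimizing $P_{DEP}$ over this regime is equivalent to minimizing $\max(\zeta-\aleph,\tilde{\sigma}_w^2/\rho)$. This quantity is nondecreasing in $\zeta$: it stays pinned at the floor $\tilde{\sigma}_w^2/\rho$ while $\zeta-\aleph\le\tilde{\sigma}_w^2/\rho$, and thereafter equals $\zeta-\aleph$, which increases with $\zeta$. To make the monotonicity precise on the active branch, I would differentiate $P_{DEP}$ directly; on $\{\zeta-\aleph>\tilde{\sigma}_w^2/\rho\}$,
\[
\frac{\partial P_{DEP}}{\partial \zeta}=\frac{1}{2\ln\rho\,(\zeta-\aleph)}>0,
\]
which confirms that $P_{DEP}$ is strictly increasing there. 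Assembling the two branches, $P_{DEP}$ is constant on the floor segment and strictly increasing afterward, so its minimum over $\zeta\ge\rho\tilde{\sigma}_w^2$ is attained at the smallest admissible threshold $\zeta^{*}=\rho\tilde{\sigma}_w^2$, which is exactly the claim that the optimal threshold is constrained by $\rho\tilde{\sigma}_w^2$.

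The step I expect to be the main obstacle is the bookkeeping around the $\max(\cdot)$: one has to verify that the two pieces join continuously at $\zeta-\aleph=\tilde{\sigma}_w^2/\rho$ and that the flat floor segment does not admit a strictly smaller value elsewhere, so that selecting the left endpoint is genuinely optimal rather than merely one of several optima. A secondary point worth stating explicitly is the sign of $\ln\rho$: since $\rho>1$ is the noise-uncertainty coefficient, $\ln\rho>0$, and this is precisely what makes the derivative positive and pins the optimum at the boundary rather than driving $\zeta^{*}$ toward $+\infty$.
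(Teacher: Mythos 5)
Your proof is correct, but it takes a genuinely different route from the paper. The paper's own proof is a short distributional argument: under $\mathcal{K}_{0}$ the noise-uncertainty model \eqref{eqn_2} bounds Willie's received power by $\rho\tilde{\sigma}_w^2$, so any threshold above this value already yields zero false alarm and can only worsen missed detection (the paper phrases this via the overlap of the $P_W$ distributions under the two hypotheses), whence $\zeta^{*}\leq\rho\tilde{\sigma}_w^2$. You instead work analytically from the closed-form DEP in \eqref{eqn_17}: in the regime $\zeta\geq\rho\tilde{\sigma}_w^2$ the term $\min(\zeta,\rho\tilde{\sigma}_w^2)$ freezes at $\rho\tilde{\sigma}_w^2$, all $\zeta$-dependence sits in $-\ln\bigl[\max(\zeta-\aleph,\tilde{\sigma}_w^2/\rho)\bigr]$, and since $\ln\rho>0$ the DEP is nondecreasing (flat on the floor branch, strictly increasing on the active branch with derivative $\tfrac{1}{2\ln\rho\,(\zeta-\aleph)}>0$), so the minimum over the regime sits at the left endpoint. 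Your version buys rigor and continuity with the paper's subsequent derivative analysis in \eqref{eqn_18}--\eqref{eqn_19}, and it correctly flags the tie case: when $\aleph\geq\tilde{\sigma}_w^2(\rho-1/\rho)$ the floor segment gives $P_{DEP}=0$ identically, so the endpoint is optimal but not unique — consistent with \eqref{eqn_20}. The paper's version buys the operational intuition (no noise realization under $\mathcal{K}_{0}$ can exceed $\rho\tilde{\sigma}_w^2$, so a higher threshold is pure loss) that your algebra reflects but does not state. One small caveat you should add: the log form of \eqref{eqn_17} presumes the integration interval is nonempty; for $\zeta>\rho\tilde{\sigma}_w^2+\aleph$ the interval is empty and the true DEP saturates at $1$ rather than following the formula, but since this only reinforces monotonicity, your conclusion stands.
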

\begin{proof}
    From \eqref{eqn_2} and \eqref{eqn_13}, based on the hypothesis $\mathcal{K}_{0}$ the maximum averaged received power at Willie, can be given as $\rho \tilde{\sigma}_w^2$. 

   Furthermore, under the hypotheses $\mathcal{K}_{0}$ and $\mathcal{K}_{1}$, the probability distributions Willie's average received power i.e., $P_{W}$ would overlap. If the distributions do not overlap then the DEP i.e., $\zeta = 0$. Hence, it is important to control the leakage power $P_{bs}\left|\left|\mathbf{H}_{br}^{H}\mathbf{\Theta}\mathbf{g}_{rw} \right|\right|_{F}^{2}$ within a definite range. Subsequently, the optimal detection threshold $\zeta^{*}$ is constrained by $\zeta \leq \rho \tilde{\sigma}_w^2$.
\end{proof}
Next, taking the derivatives of \eqref{eqn_17} with respect to $\zeta$, we obtain
\begin{align}\label{eqn_18}
    \frac{\partial}{\partial\zeta}P_{DEP}\left(\zeta \right)&=  \frac{\partial}{\partial\zeta}\left\{ 1-\frac{1}{2\text{ln}(\rho)}\left\{\text{ln}\left[\min\left( \zeta, \rho \tilde{\sigma}_w^2 \right)\right]  \right.\right.\nonumber\\
   &-\left.\left.\text{ln}\left[\text{max}\left( \zeta - \aleph, \frac{\tilde{\sigma}_{w}^{2}}{\rho}\right) \right]\right\}\right\}\nonumber\\ &=
        \begin{cases} 
             -\frac{1}{2\text{ln}(\rho)\zeta}; \quad if \quad \zeta < \frac{\tilde{\sigma}_w^2}{\rho}+\aleph \\ 
             -\frac{1}{2\text{ln}(\rho)}\left(\frac{1}{\zeta}- \frac{1}{\zeta-\aleph}\right); \quad if \quad \zeta \geq \frac{\tilde{\sigma}_w^2}{\rho}+\aleph.
        \end{cases}
\end{align}
From the conditions in \eqref{eqn_18}, the optimal detection threshold $\zeta^{*}$ is determined as
\begin{align}\label{eqn_19}
    \zeta^{*}=\text{min}\left(\frac{\tilde{\sigma}_w^2}{\rho}+\aleph, \rho \tilde{\sigma}_w^{2} \right).
\end{align}
To find the minimum DEP denoted as $P_{DEP}^{min}$ at Willie, we substitute \eqref{eqn_19} into \eqref{eqn_17}, yielding
\begin{align}\label{eqn_20}
P_{DEP}^{min}=
\begin{cases} 
             1 - \frac{1}{2\text{ln}(\rho)}\text{ln}\left(1+\frac{\rho\aleph}{\tilde{\sigma}_w^2} \right);\quad if \quad \aleph < \tilde{\sigma}_w^2\left(\rho - \frac{1}{\rho}\right)\\
             0;\quad if \quad \aleph \geq \tilde{\sigma}_w^2\left(\rho - \frac{1}{\rho}\right).
        \end{cases}
\end{align}        
Therefore, we can conclude that, when $P_{DEP}^{min}=0$, i.e., at $\aleph \geq \tilde{\sigma}_w^2\left(\rho - \frac{1}{\rho}\right)$, Willie can clearly differentiate between $\mathcal{K}_{0}$ and $\mathcal{K}_{1}$. Therefore, we are more concerned about the scenario where Willie literally finds it difficult to distinguish the hypotheses, i.e., at
\begin{align}\label{eqn_21}
   \aleph < \tilde{\sigma}_w^2\left(\rho - \frac{1}{\rho}\right).
\end{align}        
For any value $\varsigma \in [0,1]$ defining the level of covertness, the $P_{DEP}$ must satisfy $P_{DEP}>1-\varsigma$ to fulfill the requirements for covertness. As a result, the received power at Willie should satisfy
\begin{align}\label{eqn_22}
   \varsigma < \frac{(e^{2\varsigma \ln(\rho)} - 1) \tilde{\sigma}_w^2}{\rho}.
\end{align} 
The maximum permitted leakage power at Willie under this covertness constraint can be obtained from \eqref{eqn_21} and \eqref{eqn_22} as
\begin{align}\label{eqn_23}
    P_{leak} = \min \left\{\tilde{\sigma}_w^2 \left( \rho - \frac{1}{\rho} \right), \frac{(e^{2\varsigma \ln(\rho)} - 1) \tilde{\sigma}_w^2}{\rho} \right\}.
\end{align} 
It is noteworthy that the parameter $ \varsigma$ usually falls within a narrow range, namely $(\varsigma < 0.5)$. From \eqref{eqn_2}, we knew out that $\rho \geq 1$. In accordance with \eqref{eqn_23}, we can observe that the first term consistently surpasses the second term, which suggests that the leakage power is actively constrained by the covertness requirement. Therefore, within a specific range, $P_{leak}$ shows a monotonically increasing behavior with respect to $\varsigma$ under the given conditions, which implies that the amount of leakage power to Willie decreases with the increase in the covertness requirement. Therefore, the covertness constraint for the system design can be given as:
\begin{align}\label{eqn_24}
    P_{leak} > P_{bs}\left|\left|\mathbf{H}_{br}^{H}\mathbf{\Theta}\mathbf{g}_{rw} \right|\right|_{F}^{2}.
\end{align} 
\subsection{Outage Probability}
According to the RSMA principle, each user first decodes the transmitted common message $s_c$ while treating all private messages as interference. Subsequently, the user decodes its intended private message $s_k$ (for $i \neq k$), assuming perfect SIC of the common message. Based on \eqref{eqn_10} and \eqref{eqn_11}, the SINRs for the common and private signals at user $k$ can be respectively expressed as
\begin{align}\label{eqn_25}
    \gamma_{c,k} = \frac{|\mathbf{h}_{bru,k}^H \mathbf{w}_c|^2}{\sum\limits_{i=1}^K |\mathbf{h}_{bru,k}^H \mathbf{w}_i|^2 + \sigma_k^2},
\end{align}
and
\begin{align}\label{eqn_26}
    \gamma_{p,k} = \frac{|\mathbf{h}_{bru,k}^H \mathbf{w}_k|^2}{\sum\limits_{i \neq k}^K |\mathbf{h}_{bru,k}^H \mathbf{w}_i|^2 + \sigma_k^2}.
\end{align}

For Bob ($k=1$)
\begin{align}\label{eqn_27}
    \gamma_{c,1} = \frac{|\mathbf{h}_{bru,1}^H \mathbf{w}_c|^2}{\sum\limits_{i=1}^K |\mathbf{h}_{bru,1}^H \mathbf{w}_i|^2 + \sigma_1^2},
\end{align}
\begin{align}\label{eqn_28}
    \gamma_{p,1} = \frac{|\mathbf{h}_{bru,1}^H \mathbf{w}_1|^2}{\sum\limits_{i=2}^K |\mathbf{h}_{bru,1}^H \mathbf{w}_i|^2 + \sigma_1^2}.
\end{align}
In general, an outage occurs for user $k$ if the outage occurs in either the common or the private signal, i.e., if $\gamma_{c,k} < \gamma_{th,c}$ or $\gamma_{p,k} < \gamma_{th,p}$, where $\gamma_{th,c}$ and $\gamma_{th,p}$ are the threshold SINR related to the common and private message segment, respectively. Therefore, the outage probability for user $k$ is defined as
\begin{align}\label{eqn_29}
    P_{\text{out},k} &= \Pr\left(\left(\gamma_{c,k} < \gamma_{thc}\right) \cup \left(\gamma_{p,k} < \gamma_{thp}\right)\right)\nonumber\\
    &= 1 - \Pr\left(\gamma_{c,k} \geq \gamma_{th,c}, \gamma_{p,k} \geq \gamma_{th,p}\right)
\end{align}
Since the events $\gamma_{c,k} < \gamma_{th,c}$ and $\gamma_{p,k} < \gamma_{th,p}$ are correlated i.e., both depend on \eqref{eqn_8}. Therefore, the approximate OP of the User $k$ is given as
\begin{align}\label{eqn_30}
    P_{\text{out},k} & \approx F_{\gamma_{c,k}}(\gamma_{th,c}) + F_{\gamma_{p,k}}(\gamma_{th,p}) \nonumber\\
    &- \min\left\{F_{\gamma_{c,k}}(\gamma_{th,c}), F_{\gamma_{p,k}}(\gamma_{th,p})\right\},
\end{align}
where $F_{\gamma_{c,k}}(\gamma) = \Pr(\gamma_{c,k} \leq \gamma)$ and $F_{\gamma_{p,k}}(\gamma) = \Pr(\gamma_{p,k} \leq \gamma)$ are the cumulative distribution functions (CDFs) of the SINRs of common and private mmessages respectively. Based on the \eqref{A.15} and \eqref{A.16} (refer Appendix A), the OP can be given as in \eqref{eqn_31}, which is given on top of the next page.
\begin{figure*}[t!]
{
\begin{align}\label{eqn_31}
    P_{\text{out},1} &\approx \left( 1 - e^{-\frac{\gamma_{thc} \sigma_1^2}{\alpha_c P_{bs} \sigma_{bru,1}^2}} \prod_{i=1}^K \frac{1}{1 + \frac{\gamma_{thc} \alpha_i}{\alpha_c}} \right)+ \left( 1 - e^{-\frac{\gamma_{thp} \sigma_1^2}{\alpha_1 P_{bs} \sigma_{bru,1}^2}} \prod_{i=2}^K \frac{1}{1 + \frac{\gamma_{thp} \alpha_i}{\alpha_1}} \right) - \min \left\{ F_{\gamma_{c,1}}(\gamma_{thc}), F_{\gamma_{p,1}}(\gamma_{thp}) \right\},
\end{align}

}
where
\begin{align*}
    \sigma_{bru,1}^2 = \sigma_{br}^2 |\beta|^2 \frac{\lambda_1}{N}, \quad \lambda_1 = \sum_{n=1}^N \frac{1}{(4\pi \frac{r_{1n}}{\lambda})^2}.
\end{align*}
\normalsize
\hrulefill
\end{figure*}
\section{Joint Beamforming Optimization}
This subsection presents the analysis for maximizing the achievable covert transmission rate of our proposed system by jointly designing the BS beamforming matrix i.e., $\mathbf{W} = \left[\mathbf{w}_c, \mathbf{w}_1, \cdots, \mathbf{w}_K\right]$ and RIS phase-shift i.e., $\Theta$ for both the common message segment as well as the private messages segment. Depending on the mathematical formulation derived before, the optimization problem can be formulated as \(\mathcal{P}0\), which is given as
\begin{maxi!}|s|%
    {\wp_{c}, \mathbf{W}, \psi}{\sum\limits_{k=1}^{K} \left( \wp_{c,k} + R_{p,k} \right)}%
    {\label{32a}}%
    {}
    \addConstraint{\sum_{i=1}^K \wp_{c,i} \leq \log_2 \left( 1 + \gamma_{c,k} \right), \quad \forall k \in K \label{32b}}
    \addConstraint{\wp_{c,k} + R_{p,k} \geq R_k^{\text{min}}, \quad \forall k \in K \label{32c}}
    \addConstraint{R_{p,k} \leq \log_2 \left( 1 + \gamma_{p,k} \right), \quad \forall k \in K \label{32d}}
    \addConstraint{\operatorname{Tr} \left( \mathbf{W}^H \mathbf{W} \right) \leq P_{\text{max}} \label{32e}}
    \addConstraint{\left| \Theta_n \right| = 1, \quad n \in N \label{32f}}
    \addConstraint{\wp_{c,k} \geq 0, \quad \forall k \in K \label{32g}}
\end{maxi!}
where $\wp_c = \{ \wp_{c,k} \}$ and the constraints from \eqref{32b}-\eqref{32g} correspond to the common rate constraint, the QoS constraint which ensures the minimum required rate, approximate achievable rate, BS transmit power constraint, the unit modulus constraint and covert rate constraint respectively. The formulated optimization problem is quite arduous to solve due to the highly coupling variables as well as the intractable unit modulus constraints. To tackle this issue, an AO-based solution was adopted to address the coupling of variables and the non-convexity of the objective function.
\vspace{-10mm}
\subsection{Transmit Beamforming Design at the BS}
For the considered RIS phase shift matrix i.e., $\mathbf{\Theta}$, the sub-optimal value of $\mathbf{W}$ can be obtained using the AO approach under the fixed phase shift at the RIS, i.e., $\Theta = \tilde{\Theta}$. Therefore, the $\mathcal{P}0$ can be reformulated as \(\mathcal{P}1\) by incorporating the auxiliary variables $\mathbf{\Upsilon}_p$, $\boldsymbol{\beth}_p$, and $\mathbf{\nu}_c$
\begin{maxi!}|s|%
    {\mathbf{W}, \wp_c, \mathbf{\Upsilon}_p, \boldsymbol{\beth}_p, \mathbf{\nu}_c} {\sum_{k=1}^K \left( \wp_{c,k} + \Upsilon_{p,k} \right)}%
    {\label{33a}}%
    {}
    \addConstraint{\sum_{i=1}^K \wp_{c,i} \leq \log_2 \left( 1 + \nu_{c,k} \right), \quad \forall k \in K \label{33b}}
    \addConstraint{\wp_{c,k} + \Upsilon_{p,k} \geq R_k^{\text{min}}, \quad \forall k \in K \label{33c}}
    \addConstraint{\Upsilon_{p,k} \leq \log_2 \left( 1 + \beth_{p,k} \right), \quad \forall k \in K \label{33d}}
    \addConstraint{\eqref{eqn_25}, \eqref{eqn_26}, \eqref{32e}, \eqref{32g}}\notag
\end{maxi!}
where $\mathbf{\Upsilon}_p = [\Upsilon_{p,1}, \cdots, \Upsilon_{p,K}]$, $\boldsymbol{\beth}_p = [\beth_{p,1}, \cdots, \beth_{p,K}]$, $\mathbf{\nu}_c = [\nu_{c,1}, \cdots, \nu_{c,K}]$, $\overline{\mathbf{h}}_{bru}$ is substituted as $\overline{\mathbf{h}}_{bru} \triangleq \mathbf{h}_{bru} |_{\Theta = \tilde{\Theta}}$. A closed-form solution for $\mathcal{P}1$ is not feasible due to the non concavity of \eqref{33a} along with the non convexity nature of the constraints in \eqref{eqn_25} and \eqref{eqn_26}. In order to solve this, we iteratively approximate $\mathcal{P}1$ into a convex form using the SCA technique. Furthermore, auxiliary variables $\mathbf{\kappa}_c = [\kappa_{c,1}, \cdots, \kappa_{c,K}]$ and $\mathbf{\eta}_p = [\eta_{p,1}, \cdots, \eta_{p,K}]$ are introduced to simplify the terms in \eqref{eqn_25} and \eqref{eqn_26}, respectively as follows
\begin{align}\label{eqn_34}
\nu_{c,k} &\geq \frac{\left| \overline{\mathbf{h}}_{bru}^H \mathbf{w}_c \right|^2}{\kappa_{c,k}}, \quad \beth_{p,k} \geq \frac{\left| \overline{\mathbf{h}}_{bru}^H \mathbf{w}_k \right|^2}{\eta_{p,k}}, \quad \forall k \in K,  \\
\kappa_{c,k} &\leq \sum_{i=1}^K \left| \overline{\mathbf{h}}_{bru}^H \mathbf{w}_i \right|^2 + \sigma_k^2, \label{eqn_35} \\
\eta_{p,k} &\leq \sum_{i=1, i \neq k}^K \left| \overline{\mathbf{h}}_{bru}^H \mathbf{w}_i \right|^2 + \sigma_k^2. \label{eqn_36}
\end{align}
Now, using the SCA technique, the non-convex expressions in \eqref{eqn_34} can be converted into an equivalent affine form as given below
\begin{align}
\frac{2 \Re \left( \mathbf{w}_c^{(j)H} \overline{\mathbf{h}}_{bru} \overline{\mathbf{h}}_{bru}^H \mathbf{w}_c \right)}{\kappa_{c,k}^{(j)}} &- \frac{\left| \overline{\mathbf{h}}_{bru}^H \mathbf{w}_c^{(j)} \right|^2 \kappa_{c,{bru}}}{\left( \kappa_{c,k}^{(j)} \right)^2} \nonumber\\&\geq \nu_{c,k}, \quad \forall k \in K, \label{eqn_37} \\
\frac{2 \Re \left( \mathbf{w}_d^{(j)H} \overline{\mathbf{h}}_{bru} \overline{\mathbf{h}}_{bru}^H \mathbf{w}_k \right)}{\eta_{p,k}^{(j)}} &- \frac{\left| \overline{\mathbf{h}}_{bru}^H \mathbf{w}_k^{(j)} \right|^2 \eta_{p,k}}{\left( \eta_{p,k}^{(j)} \right)^2} \nonumber\\&\geq \beth_{p,k}, \quad \forall k \in K, \label{eqn_38}
\end{align}
where the values obtained at the $j^{\text{th}}$ SCA iteration are $\mathbf{w}^{(j)}, \mathbf{w}_c^{(j)}, \kappa_c^{(j)}$, and $\eta_p^{(j)}$. Consequently, the transmit optimization beamforming problem $\mathcal{P}1$ using $j^{\text{th}}$ SCA iteration can be ultimately given as $\mathcal{P}2$
\begin{maxi!}|s|%
    {\mathbf{W}, \wp_c, \mathbf{\Upsilon}_p, \boldsymbol{\beth}_p, \mathbf{\nu}_c, \mathbf{\kappa}_c, \mathbf{\eta}_p} {\sum_{k=1}^K \left( \wp_{c,k} + \Upsilon_{p,k} \right)}%
    {\label{39a}}%
    {}
    \addConstraint{\eqref{32e}, \eqref{32g}, \eqref{33b}, \eqref{33c}, \eqref{33d}}{\notag}%
    \addConstraint{\eqref{eqn_35}, \eqref{eqn_36}, \eqref{eqn_37}, \eqref{eqn_38}.}{\notag}%
\end{maxi!}
Therefore, we can say that $\mathcal{P}2$ is converted into a convex optimization problem with all the considered optimization variables. As a result, we suggest an iterative approach to the transmit beamforming design using the convergence factor $\epsilon$, as described in Algorithm~\ref{Algo_1}.
\subsection{RIS Reflection Beamforming Optimization}
For the obtained $\mathbf{W} = \overline{\mathbf{W}}$, the optimal RIS beamformer $\mathbf{\Theta}$ is designed. For this, the optimization problem can be expressed in a similar fashion by introducing the auxiliary variable $\hat{\mathbf{\Upsilon}}_p, \hat{\boldsymbol{\beth}}_{p}$, and $\hat{\mathbf{\nu}}_{c}$. Therefore, the $\mathcal{P}0$ will be reformulated as as the optimization problem $\mathcal{P}3$, given as
\begin{maxi!}|s|%
    {\Theta, \hat{\wp}_{c}, \hat{\mathbf{\Upsilon}}_{p}, \hat{\boldsymbol{\eta}}_{p}, \hat{\mathbf{\nu}}_{c}}{\sum\limits_{k=1}^{K} \left( \hat{\wp}_{c,k} + \hat{\Upsilon}_{p,k} \right)}%
    {\label{40a}}%
    {}
    \addConstraint{\sum_{k=1}^{K} \wp_{c,k} \leq \log_{2} \left( 1 + \hat{\nu}_{c,k} \right), \quad \forall k \in K \label{40b}}
    \addConstraint{\hat{\wp}_{c,k} + \hat{\Upsilon}_{p,k} \geq R_{k}^{\text{min}}, \quad \forall k \in K \label{40c}}
    \addConstraint{\hat{\Upsilon}_{p,k} \leq \log_2 \left( 1 +  \hat{\beth}_{p,k} \right),\forall k \in K \label{40d}}
    \addConstraint{ \hat{\wp}_{c,k} \geq 0, \quad \forall k \in K \label{40e}}
    \addConstraint{\eqref{eqn_25}, \eqref{eqn_26}, \eqref{32f}}\notag
\end{maxi!}
\begin{algorithm}[H]
\caption{Transmit Beamforming Design at the BS using SCA}
\label{Algo_1}
\begin{algorithmic}
\State \textbf{Input:} $\tilde{\Theta}, \epsilon$, and $\mathbf{W}^0$
\State \textbf{Initialize:} $j=0, R_{\text{sum}}^{[-1]}=0$
\While{$\left| R_{\text{sum}}^{(j)} - R_{\text{sum}}^{(j-1)} \right| \leq \epsilon$ or $j \leq J_{\text{sca}}^{\text{itr}}$}
\State Calculate $\kappa_c^{(j)}, \eta_p^{(j)}, \nu_c^{(j)}, \beth_p^{(j)}$
\State $j = j + 1$
\State Solve $\mathcal{P}2$ to obtain the beamforming matrix $\mathbf{W}^{(j)}$
\State Calculate $R_{\text{sum}}^{(j)} = \sum_{k=1}^K \left( \wp_{c,k} + \Upsilon_{p,k} \right)$
\EndWhile
\State \textbf{Output:} $\hat{R}_{\text{sum}} = R_{\text{sum}}^{(j)}$ and $\hat{\mathbf{W}} = \mathbf{W}^{(j)}$
\end{algorithmic}
\end{algorithm}
where $\hat{\mathbf{\Upsilon}}_p = [\hat{\Upsilon}_{p,1}, \cdots, \hat{\Upsilon}_{p,K}]$, $\hat{\boldsymbol{\beth}}_p = [\hat{\beth}_{p,1}, \cdots, \hat{\beth}_{p,K}]$, $\hat{\mathbf{\nu}}_c = [\hat{\nu}_{c,1}, \cdots, \hat{\nu}_{c,D}]$. $\hat{\wp}_c = [\hat{\wp}_{c,1}, \cdots, \hat{\wp}_{c,K}]$ is the rate allocation vector of the message related to the common segment. The RIS beamformer design in $\mathcal{P}3$ is non-convex in nature. Therefore, we adopt the SCA method at fixed $\mathbf{W} = \overline{\mathbf{W}}$ and reformulate $\mathcal{P}3$ as $\mathcal{P}4$ given as
\begin{maxi!}|s|%
    {\begin{array}{c}
        \Theta, \hat{\wp}_c, \hat{\mathbf{\Upsilon}}_p\\
        \hat{\beth}_p, \hat{\nu}_c, \hat{\kappa}_c, \hat{\eta}_p
     \end{array}}{\sum\limits_{k=1}^{K} \left( \hat{\wp}_{c,k} + \hat{\Upsilon}_{p,k} \right)}%
    {\label{41a}}%
    {}
    \addConstraint{\sum_{i=1}^K \left| \mathbf{w}_i^H \mathbf{h}_{bru} \right|^2 + \sigma_k^2    \leq \hat{\kappa}_{c,k} \label{41b}}
    \addConstraint{\sum_{i=1, i \neq k}^K  \left| \mathbf{w}_i^H \mathbf{h}_{bru} \right|^2 + \sigma_k^2 \leq \hat{\eta}_{p,k} \label{41c}}  
    \addConstraint{A^{*} / \hat{\kappa}_{c,k}^{(q)} - B^{*} \hat{\kappa}_{c,k} / \left( \hat{\kappa}_{c,k}^{(q)} \right)^2 \geq \hat{\nu}_{c,k}\label{41d}}       
    \addConstraint{A^{*} / \hat{\eta}_{p,k}^{(q)} - B^{*} \hat{\eta}_{p,k} / \left( \hat{\eta}_{p,k}^{(q)} \right)^2 \geq \hat{\beth}_{p,k}\label{41e}}  
    \addConstraint{\sum_{k=1}^K \hat{\wp}_{c,k} \leq \log_2 \left( 1 + \hat{\nu}_{c,k} \right), \quad \forall k \in K\label{41f}}
    \addConstraint{\log_2 \left( 1 + \hat{\beth}_{p,k} \right) \geq \hat{\Upsilon}_{p,k}, \quad \forall k \in K \label{41g}}
    \addConstraint{\left| \Theta_k \right| \leq 1, \quad k \in K \label{41h}}   
    \addConstraint{\eqref{40b}, \eqref{40c}, \eqref{40d}, \eqref{40e}}\notag
\end{maxi!}
\begin{algorithm}
\caption{AO for the Combined Solution}
\label{algo_2}
\begin{algorithmic}
\State \textbf{Input:} $\hat{\Theta}, \overline{\mathbf{W}}, S_{\text{sca}}^{\text{itr}}$ and \textbf{Initialize:} $\mathbf{W}^0, s=0$, and $R_{\text{sum}}^{-1}=0$
\While{$\left| R_{\text{sum}}^{(s)} - R_{\text{sum}}^{(s-1)} \right| \leq \epsilon$ or $s \leq S_{\text{sca}}^{\text{itr}}$}
\State Solve $\mathcal{P}4$ to obtain $\psi^{(s+1)}$
\State To obtain $\mathbf{W}^{(s+1)}$, solve $\mathcal{P}2$ using Algorithm 1 with obtained $\psi^{(s+1)}$ in step 2
\State Update $s = s + 1$
\EndWhile
\State \textbf{Output:} $R_{\text{sum}} = R_{\text{sum}}^{(s)}$
\end{algorithmic}
\end{algorithm}
where $A^{*}=2 \Re \left( \left( \bar{\mathbf{h}}_{bru}^{\mathbf{R}} \Theta^{(q)} \right) \bar{\mathbf{h}}_{bru}^{\mathbf{R}} \Theta \right)$, $B^{*}=\left| \bar{\mathbf{h}}_{bru}^{\mathbf{R}} \Theta^{(q)} \right|^2$,   $\hat{\mathbf{\kappa}}_p = [\hat{\kappa}_{p,1}, \cdots, \hat{\kappa}_{p,K}]$, $\hat{\mathbf{\eta}}_p = [\hat{\eta}_{p,1}, \cdots, \hat{\eta}_{p,K}]$, and $\bar{\mathbf{h}}_{bru}^{\mathbf{R}} = \mathbf{w}^H \mathbf{H}_A \mathbf{g}_{rk}$, $\forall k \in K$, and, $\Theta^{(q)}, \hat{\kappa}_{c,k}^{(q)}, \hat{\eta}_{p,k}^{(q)}$ are the values of $\Theta, \hat{\kappa}_{c,k}, \hat{\eta}_{p,k}$ obtained at the $q^{\text{th}}$ iteration, respectively. Note that the conditions in $\mathcal{P}4$ are convex and hence it is solved using CVX.
\subsection{Combined Approach and Computational Complexity}

Finally, we propose a method based on alternating optimization (AO) for designing of the joint optimization of the transmit beamforming and RIS beamforming, which is described in Algorithm~\ref{algo_2}. This approach iteratively addresses the subproblems of RIS beamformer design in $\mathcal{P}4$ and transmit beamforming in $\mathcal{P}2$. Until convergence within a tolerance $\epsilon$, the solution from each subproblem in one iteration is used as input for the next. While the precoder design in Algorithm~\ref{Algo_1} converges within $J_{\text{sca}}^{\text{itr}}$ iterations, we assume that the RIS phase-shift design converges within $Q_{\text{sca}}^{\text{itr}}$ SCA iterations. In $\mathcal{P}4$, the phase-shift design has $M_{BS}K+6K$ variables and $8N+1$ constraints. Therefore, the worst-case computational complexity for the corresponding subproblems is $\mathcal{O} \left( J_{\text{sca}}^{\text{itr}} (8K+1)^2 (M_{BS}K+6K) \right)$ and $\mathcal{O} \left( Q_{\text{sca}}^{\text{itr}} (8K+N)^2 (N+6K) \right)$. Finally, the overall computational complexity for Algorithm~\ref{algo_2} is calculated assuming that the AO algorithm converges in $S_{\text{sca}}^{\text{itr}}$ as
\begin{align}
&\mathcal{O} \left( S_{\text{sca}}^{\text{itr}} \left( J_{\text{sca}}^{\text{itr}} (8K+1)^2 (M_{BS}k+6k) + \right.\right.\nonumber\\
& \left.\left. Q_{\text{sca}}^{\text{itr}} (8K+N)^2 (N+6K) \right) \right).
\end{align}
\section{Numerical Results}
\begin{figure}[t!]
    \centering
        \includegraphics[width=\linewidth]{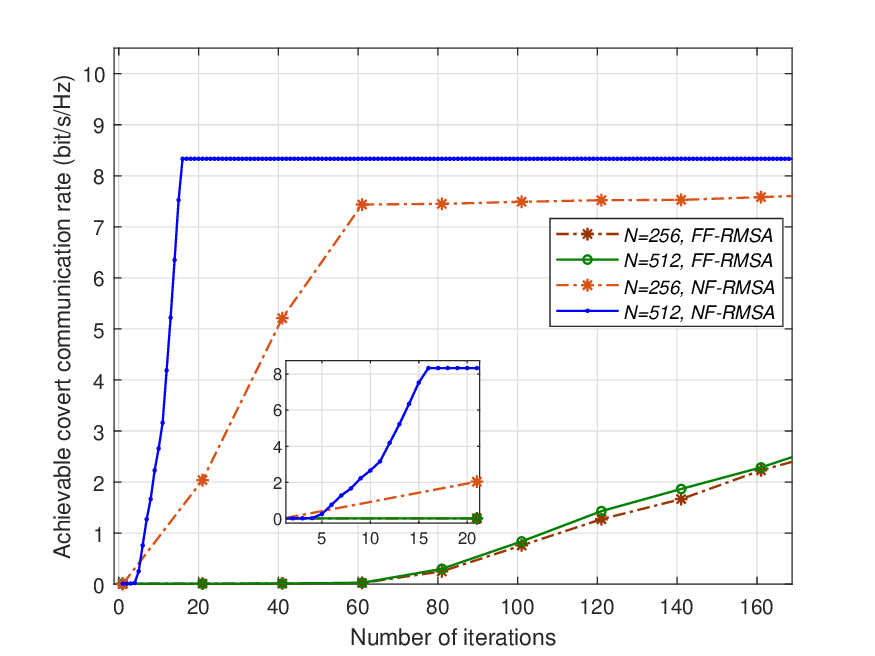}
    \caption{Convergence of the proposed algorithm.}
   \label{convergence}
\end{figure}
To evaluate the covert performance of the proposed NF RIS-aided RSMA system, simulations are conducted to assess its performance based on key performance metrics, namely the average covert communication rate, DEP, and OP.

These performance metrics are closely tied to the efficiency of the underlying optimization algorithm. Fig.~\ref{convergence} illustrates its convergence behavior under different RIS configurations, considering the system parameters $P_{bs,\max} = 30$ dBm, $M_{BS} = 4$, and $N = 512, 256$. As depicted in the figure, the algorithm converges within a few iterations. Specifically, it requires approximately 60 iterations for $N = 256$, while only 16 iterations are needed for $N = 512$, indicating a significantly faster convergence with a larger RIS. Moreover, a higher covert communication rate of approximately 8.2 bit/s/Hz is achieved for $N = 512$, compared to 7.3 bit/s/Hz for $N = 256$, corresponding to a 12.3\% improvement. This enhancement is attributed to the improved beamforming gain achieved in the NF region. \footnote{Such gains are not attainable in the FF region, which is included here solely for comparative analysis with the NF scenario.} Furthermore, despite optimized transmit beamforming and RIS phase shifts, the achievable covert communication rate remains notably low when both the legitimate users and Willie are located in the FF region, highlighting the performance advantage of the proposed NF-RSMA scheme. 
\begin{figure}[t!]
    \centering
        \includegraphics[width=\linewidth]{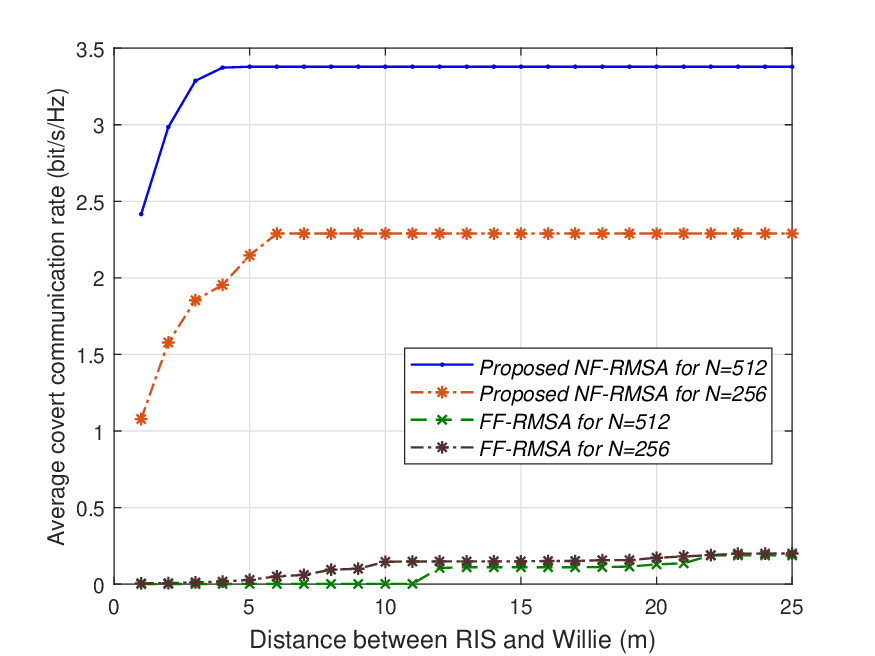}
    \caption{Average covert communication rate versus distance between RIS and Willie.}
   \label{distance}
\end{figure}
Fig.~\ref{distance} illustrates the effect of the separation distance between RIS and Willie on the covert communication rate for two different configurations of RIS elements. Notably, even when Willie is positioned close to Bob, the proposed algorithm maintains a considerable amount of covert communication rate across both RIS configurations. Additionally, as the distance increases, the covert rate improves significantly, highlighting the importance of the beam alignment in near-field communication enabled by optimized RIS phase shifts. In contrast, within the FF region, the covert communication rate approaches zero regardless of the number of RIS elements, indicating a negligible impact on Bob’s performance in such scenarios.
\begin{figure}[t!]
    \centering
        \includegraphics[width=\linewidth]{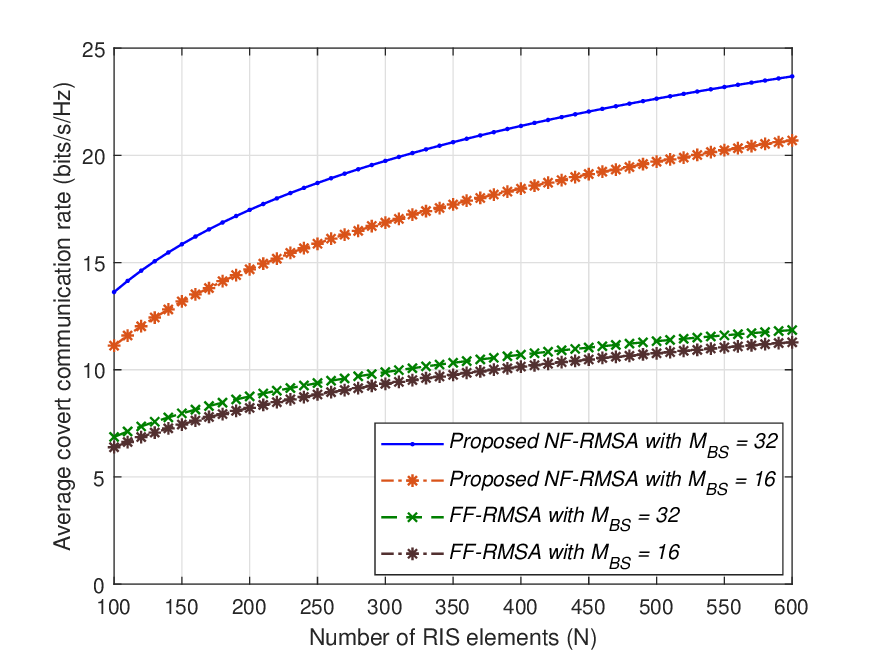}
    \caption{Average covert communication rate versus number of RIS elements ($N$) w.r.to different BS antennas ($M_{BS}$).}
   \label{RIS_elements}
\end{figure}
Fig.~\ref{RIS_elements} presents the average covert communication rate as a function of the number of RIS reflecting elements $N$, for two different base station antenna configurations, i.e., $M_{BS}=(16,32)$. As observed, the covert communication rate improves with an increase in both $N$ and  $M_{BS}$, indicating that larger RIS and broader BS apertures enhance system performance. Notably, the proposed algorithm demonstrates substantial gains in the NF scenario compared to the FF case. For instance, with $N = 350$  and $M_{BS} = 32$, the average covert communication rate reaches approximately 22~bits/s/Hz, which is 2.2 times higher than that in the FF scenario. This highlights the importance of RIS phase shift optimization in enhancing covert communication performance.


\begin{figure}[t!]
    \centering
        \includegraphics[width=\linewidth]{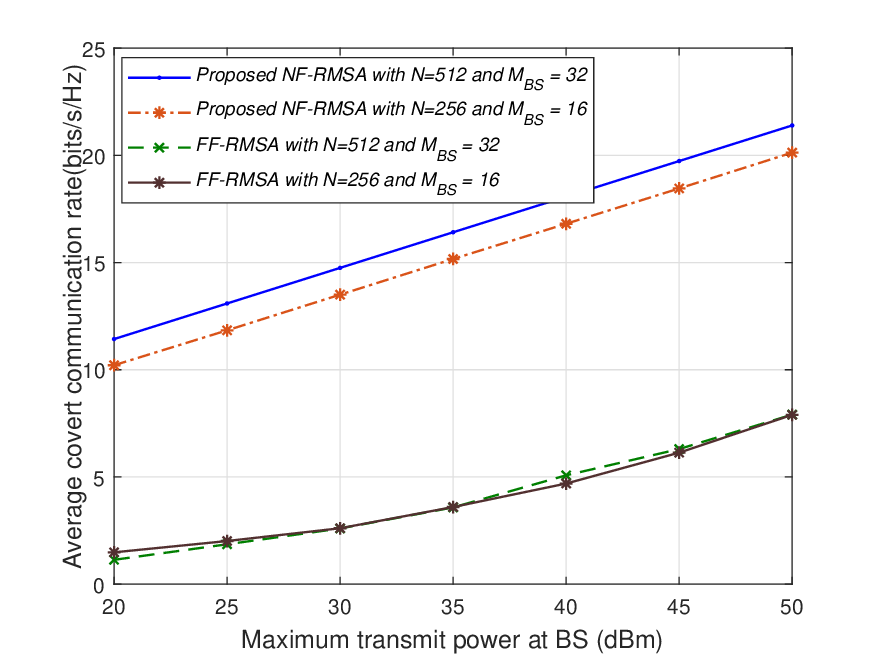}
    \caption{Average covert communication rate versus the maximum transmit power at the BS.}
   \label{BS_power}
\end{figure}
Fig.~\ref{BS_power} illustrates the average covert communication rate as a function of the maximum transmit power of the BS, denoted by $P_{bs,\max}$. As expected, increasing $P_{bs,\max}$ consistently leads to an enhanced average covert communication rate. For instance, at $P_{bs,\max} = 35$ dBm, the average covert communication rate of the proposed NF-RSMA reaches approximately 16~bits/s/Hz, which is four times higher than that achieved in the FF-RSMA scenario. This enhacement is due to the improved SINR at the users which is driven by the higher BS transmit power. In addition, the use of RSMA further contributes to this enhacement as it allows the BS to decompose the transmitted data into common and private components, which users can effectively recombine upon reception. Moreover, this splitting mechanism significantly boosts the achievable covert communication rate per user, especially at high transmit power levels, when combined with optimal reflection beamforming at the NF area. 

\begin{figure}[t!]
    \centering
        \includegraphics[width=\linewidth]{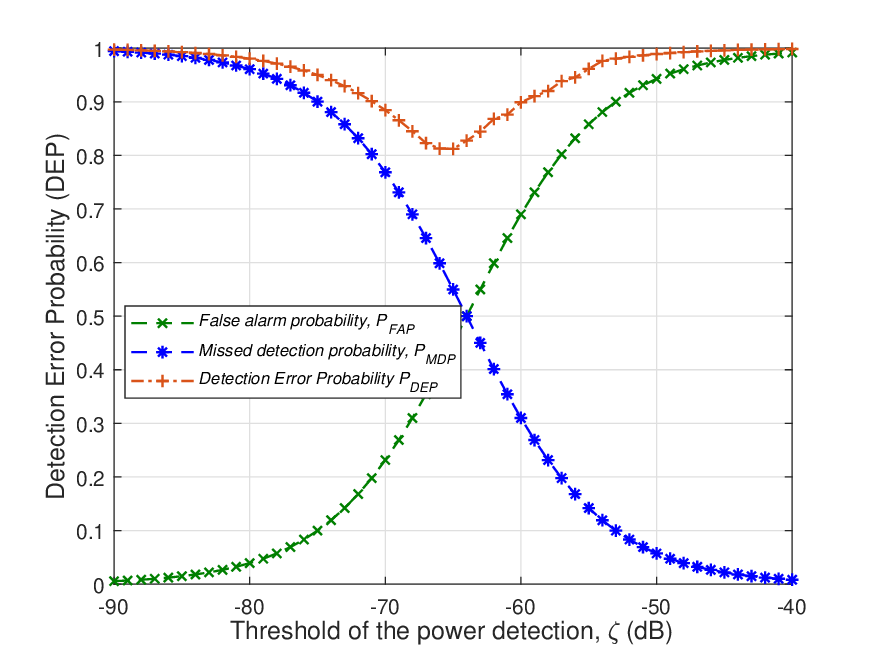}
    \caption{DEP versus Willie's detection power threshold, $\zeta$ (dB).}
   \label{DEP}
\end{figure}

Fig.~\ref{DEP} demonstrates the $P_{FAP}$, $P_{MDP}$, and $P_{DEP}$ versus Willie's detection power i.e., $\zeta$. As observed, $P_{FAP}$ decreases with increasing $\zeta$, while $P_{MDP}$ gradually increases until it reaches close to 1. Notably, the rate at which $P_{FAP}$ decreases with respect to $\zeta$ is more significant than the rate of increase in $P_{MDP}$, resulting in  a gradual decrease in the DEP. Moreover, as $\zeta$ increases further , $P_{FAP}$ tends toward zero, indicating that the system rarely encounters false alarms at higher thresholds. However, this also causes $P_{DEP}$ to rise, implying a trade-off where enhanced detection certainty compromises the level of covertness. Furthermore, for a fixed value of $\zeta$, Willie's $P_{FAP}$ increases with the BS's transmit power allocated to User~$K$. This implies that Willie becomes more likely to incorrectly infer the presence of BS's communication. 
\begin{figure}[t!]
    \centering
        \includegraphics[width=\linewidth]{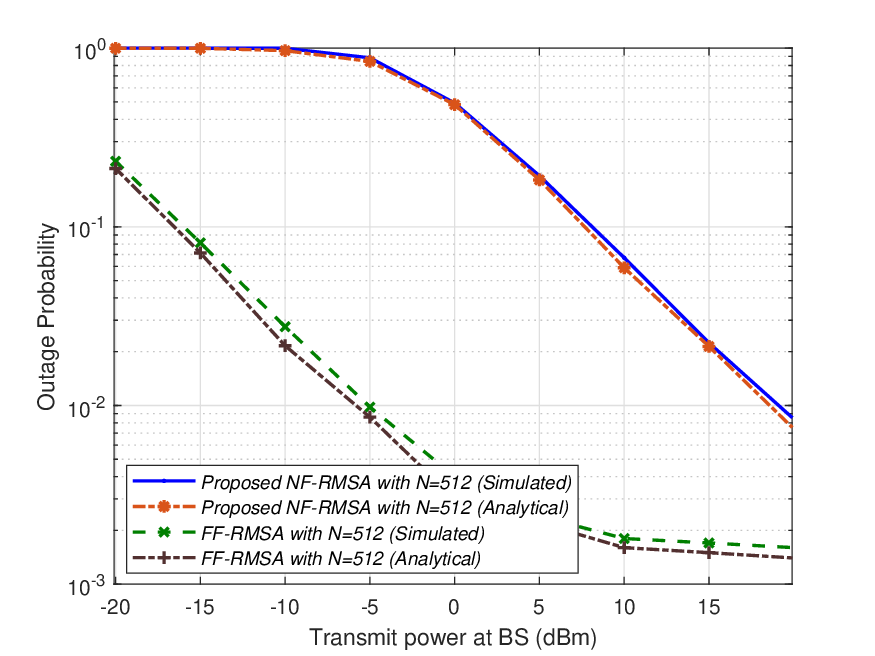}
    \caption{OP versus transmit power at the BS for $\gamma_{th}=-10dB$.}
   \label{op}
\end{figure}

Fig.~\ref{op} depicts the OP performance of Bob as function of the BS transmit power $P_{bs}$ with $\gamma_{th}=-10dB$. It is observed that the OP decreases with increasing $P_{bs}$, which can be attributed to the higher power allocated to Bob. Furthermore, it is observed that the asymptotic curve of the OP converges to both the theoretical and simulated results, demonstrating that our theoretical findings closely align with the simulations and effectively characterize the OP performance of the proposed NF-RSMA scheme. It is worth noting that the OP exhibits an inflection point as $P_{bs}$  reaches a particular threshold, this pattern can be explained by the reflected minimal beamforming during the initial OP decline. Although both scenarios utilize the same number of RIS elements, NF configuration yields significantly better performance compared to the FF case. This clearly demonstrates the superiority of the proposed NF-RSMA scheme in enhancing covert communication reliability.

\section{Conclusion and future work}
In this paper, we  proposed a novel framework for covert communication using RSMA-based NF RIS, where the average covert communication rate of Bob and that of other public users is optimized, while adhering to Willie's covertness limitation. An analytical framework was developed based on the SCA technique to address the joint optimization of the transmit beamforming at the base station and the reflection beamforming at the RIS through an AO algorithm. Simulation results demonstrate that the proposed RSMA-based NF scheme significantly outperforms the far-field scheme in terms of covert rate performance, even when Willie is positioned near both the RIS and Bob. Future work involves the analysis of this approach under imperfect CSI while considering scenarios with multiple Willies near legitimate users, which would critically affect overall performance.
\vspace{-1mm}
\section*{\uppercase{Appendix A}}

Since $\mathbf{g}_{rk}$ is deterministic in the NF LoS and $\mathbf{\Theta}$ is a diagonal phase shift matrix, therefore $\mathbf{\Theta} \mathbf{g}_{rk}$ is assumed to be scaled and phase-shifted version of $\mathbf{g}_{rk}$. Let us consider
\begin{align}\label{A.1}
    \mathbf{v}_k = \mathbf{\Theta} \mathbf{g}_{rk}   \tag{A.1},
\end{align}
where $v_{k,n} = \beta e^{j\theta_n} g_{rk,n}$. Then, \eqref{eqn_8} can be rewritten as
\begin{align}\label{A.2}
    \mathbf{h}_{bru,k} = \mathbf{H}_{br}^H \mathbf{v}_k.        \tag{A.2}
\end{align}
Since $\mathbf{H}_{br}$ is AWGN with $\mathbf{h}_{bru,k} \sim \mathcal{CN}(0, \sigma_{bru,k}^2 \mathbf{I}_{M_{BS}})$, where the variance is given as
\begin{align}\label{A.3}
    \sigma_{bru,k}^2 &= \sigma_{br}^2 \|\mathbf{v}_k\|^2= \sigma_{br}^2 |\beta|^2 \sum_{n=1}^N |g_{rk,n}|^2.     \tag{A.3}
\end{align}
Since $|g_{rk,n}|^2 = \frac{|\chi_{kn}|^2}{N} = \frac{1}{N (4\pi \frac{r_{kn}}{\lambda})^2}$, hence it can be expressed as
\begin{align}\label{A.4}
    \sum_{n=1}^N |g_{rk,n}|^2 = \sum_{n=1}^N \frac{1}{N (4\pi \frac{r_{kn}}{\lambda})^2} = \frac{1}{N} \sum_{n=1}^N \frac{1}{(4\pi \frac{r_{kn}}{\lambda})^2},        \tag{A.4}
\end{align}
and
\begin{align}\label{A.5}
    \sigma_{bru,k}^2 = \sigma_{br}^2 \beta^2 \frac{1}{N} \sum_{n=1}^N \frac{1}{(4\pi \frac{r_{kn}}{\lambda})^2}.         \tag{A.5}
\end{align}
Let us assume, $\lambda_k = \sum\limits_{n=1}^N \frac{1}{(4\pi \frac{r_{kn}}{\lambda})^2}$, therefore \eqref{A.5} can be modified as
\begin{align}\label{A.6}
    \sigma_{bru,k}^2 = \frac{1}{N}\beta^2\lambda_k\sigma_{br}^2.    \tag{A.6}
\end{align}

Consider the case where $\mathbf{w}_c$ and $\mathbf{w}k$ are designed to maximize the signal power at $\mathbf{h}_{bru,k}$. For simplicity, equal power allocation is assumed for the private messages, and a fraction of the total power is allocated to the common message, subject to the total power constraint given as
\begin{align}\label{A.7}
    \|\mathbf{w}_c\|^2 + \sum_{k=1}^K \|\mathbf{w}_k\|^2 \leq P_{bs}. \tag{A.7}
\end{align}
Additionally, consider $\|\mathbf{w}_c\|^2 = \alpha_c P_{bs}$, $\|\mathbf{w}_k\|^2 = \alpha_k P_{bs}$, with $\alpha_c + \sum\limits_{k=1}^K \alpha_k = 1$. Therefore, the approximate signal power for the common message can be expressed as
\begin{align}\label{A.8}
    |\mathbf{h}_{bru,k}^H \mathbf{w}_c|^2 \sim \sigma_{bru,k}^2 \|\mathbf{w}_c\|^2 \chi^{2}(2),   \tag{A.8}
\end{align}
where $\chi^2(2)$ is a chi-squared distribution having degrees of freedom of two that is obtained using the exponential distribution for the magnitude squared of the complex Gaussian random variable. The interference power can be given as
\begin{align}\label{A.9}
    \sum_{i=1}^K |\mathbf{h}_{bru,k}^H \mathbf{w}_i|^2 = \sum_{i=1}^K \sigma_{bru,k}^2 \|\mathbf{w}_i\|^2 \chi_i^2(2).   \tag{A.9}
\end{align}
Therefore, the SINR $\gamma_{c,k}$ can be given as
\begin{align}\label{A.10}
    \gamma_{c,k} = \frac{\sigma_{bru,k}^2 \alpha_c P_{bs} X_c}{\sum\limits_{i=1}^{K} \sigma_{bru,k}^2 \alpha_i P_{bs} X_i + \sigma_{k^2}},    \tag{A.10}
\end{align}
where $X_c, X_i \sim \text{Exp}(1)$.

Similarly, the SINR for the private message is given as
\begin{align}\label{A.11}
    \gamma_{p,k} = \frac{\sigma_{bru,k}^2 \alpha_k P_{bs} X_k}{\sum\limits_{i \neq k}^K \sigma_{bru,k}^2 \alpha_i P_{bs} X_i + \sigma_k^2}.    \tag{A.11}
\end{align}

Moreover, the CDF for $\gamma_{c,k}$ can be defined as
\begin{align}\label{A.12}
    F_{\gamma_{c,k}}(\gamma) = \Pr\left( \frac{\sigma_{bru,k}^2 \alpha_c P_{bs} X_c}{\sum\limits_{i=1}^{K} \sigma_{bru,k}^2 \alpha_i P_{bs} X_i + \sigma_{k^2}} \leq \gamma \right).   \tag{A.12}
\end{align}
Consider $Y = \sum\limits_{i=1}^K \alpha_i P_{bs} X_i + \frac{\sigma_k^2}{\sigma_{bru,k}^2}$, that implies $ F_{\gamma_{c,k}}(\gamma) = \Pr\left( X_c \leq \frac{\gamma Y}{\alpha_c P_{bs}} \right)$ which gives the equation in the form $ F_{X_c}(x) = 1 - e^{-x}$. Thus, it can be written as
\begin{align}\label{A.13}
    F_{\gamma_{c,k}}(\gamma) = \mathbb{E}_Y \left[ 1 - e^{-\frac{\gamma Y}{\alpha_c P_{bs}}} \right] = 1 - \mathbb{E}_Y \left[ e^{-\frac{\gamma Y}{\alpha_c P_{bs}}} \right].  \tag{A.13}
\end{align}
Therfore, the moment-generating function (MGF) of $Y$ can be expressed as
\begin{align}\label{A.14}
    \mathbb{E}\left[ e^{-t Y} \right] = e^{-t \frac{\sigma_k^2}{\sigma_{bru,k}^2}} \prod_{i=1}^K \frac{1}{1 + t \alpha_i P_{bs}},   \tag{A.14}
\end{align}
which gives the CDF as
\begin{align}\label{A.15}
    F_{\gamma_{c,k}}(\gamma) = 1 - e^{-\frac{\gamma \sigma_k^2}{\alpha_c P_{bs} \sigma_{bru,k}^2}} \prod_{i=1}^K \frac{1}{1 + \frac{\gamma \alpha_i}{\alpha_c}}.   \tag{A.15}
\end{align}

Similarly, for $\gamma_{p,k}$, the CDf can be expressed as
\begin{align}\label{A.16}
    F_{\gamma_{p,k}}(\gamma) = 1 - e^{-\frac{\gamma \sigma_k^2}{\alpha_k P_{bs} \sigma_{bru,k}^2}} \prod_{i \neq k}^K \frac{1}{1 + \frac{\gamma \alpha_i}{\alpha_k}}.  \tag{A.15}
\end{align}

\printbibliography 

\end{document}